\newenvironment{Tableau}[1]{%
  \tikzpicture[scale=0.4,draw/.append style={thick,black},
                      baseline=(current bounding box.center)]
    \tableauRow=-1.5
    \foreach \Row in {#1} {
       \tableauCol=0.5
       \foreach\k in \Row {
         \draw[thin](\the\tableauCol,\the\tableauRow)rectangle++(1,1);
         \draw[thin](\the\tableauCol,\the\tableauRow);
         \global\advance\tableauCol by 1
       }
       \global\advance\tableauRow by -1
    }
}{\endtikzpicture}
\define\cal{\mathcal}
\define\Z{\mathbb Z}
\define\N{\mathbb N}
\theoremstyle{plain} 
\newtheorem{theorem}{\indent\sc Theorem}[section]
\newtheorem{lemma}[theorem]{\indent\sc Lemma}
\newtheorem{corollary}[theorem]{\indent\sc Corollary}
\theoremstyle{definition} 
\newtheorem{remark}[theorem]{\indent\sc Remark}
\newtheorem{example}[theorem]{\indent\sc Example}
\newtheorem{problem}[theorem]{\indent\sc Problem}
\def\address#1#2{\begingroup
\noindent\parbox[t]{7.8cm}{%
\small{\scshape\ignorespaces#1}\par\vskip1ex
\noindent\small{\itshape E-mail address}%
\/: #2\par\vskip4ex}\hfill%
\endgroup}%
\title{\uppercase{Arithmetic of Heisenberg ring and cyclic group actions }} 
\author{
%
\bigskip \\
\textsc{ Piotr Kraso{\'n}, Jan Milewski } 
}
\date{} 
\begin{document}
\maketitle

\footnote{ 
2000 \textit{Mathematics Subject Classification}.
 05 A17, 05A19, 05E18.
}
\footnote{ 
\textit{Key words and phrases}. 
Heisenberg ring, partitions of integers, Gaussian polynomials, group actions
}


\begin{abstract}
\noindent
In this paper we compute in some new cases the  cardinalities of the fibers of certain natural fibrations that appear in the analysis of the configuration space of the Heisenberg ring.  This is done by means of certain cyclic group actions on some subsets of 
restricted partitions. 
   \end{abstract}

\section{Introduction}
In this paper we consider the Heisenberg ring with $N$  nodes and   $r\leq \frac{N}{2}$ reversed spins. According to this model at each node at most one reverse is admissible.
 Any such reverse is called a Bethe pseudo-particle. By assumption Bethe  pseudo-partcles are identical and indistinguishable.
It is natural to treat the set of nodes as a  cyclic group ${\mathbb Z}/n{\mathbb Z}$ or rather a regular orbit of it.
Thus
\begin{equation} \label{tildan}
\tilde{N}=\{j|j=1,\ldots ,N\}\equiv \Z/N\Z, \quad {\mathrm{and}}, \quad [j]_N\in \Z/N\Z, \quad      1\leq j\leq N.\end{equation}

Classical configuration space of the Heisenberg ring with $r$ Bethe pseudo-particles can be identified with the following set:

\begin{equation} Q_{N,r} =\{ {\bf j}= (j_1, \ldots , j_r)\in {\tilde N}^r |\;  1\leq j_1< \ldots < j_r \leq N \} \end{equation}
which corresponds, under the identification (\ref{tildan}), to the family  of $r$ element subsets of 
 $\Z/n\Z$. 
 
 On the covering space:
 \begin{equation}  P_{N,r}= \{ (j_1, \ldots , j_r)\in {\Z}^r |\; j_1< \ldots < j_r <j_1+N \}   \end{equation}
we introduce 
an action $ L \, : \, ({\Z},+) \times P_{N,r} \longrightarrow P_{N,r}$ defined by the following condition:
\begin{equation} L(1,(j_1, \ldots , j_r))=(j_2, \ldots , j_r,j_1+N) .\end{equation} 
The orbit space of  this action is equal to the classical configuration space:
\begin{equation}	Q_{N,r}\equiv P_{N,r}/L .\end{equation}
The following projections:

\begin{equation} \pi _{N,r}: Q_{N,r}\longrightarrow {\mathbb Z}/N{\mathbb Z}, \quad
\label{smb} \pi _{N,r}({\bf j})=\left[ \sum_{\alpha \in \tilde r} j_{\alpha} \right]_N, \end{equation}

\begin{equation} \tilde \pi _{N,r}: P_{N,r}\longrightarrow {\Z}, \quad
\label{smt} \tilde \pi _{N,r}({\bf j})= \sum_{\alpha \in \tilde r} j_{\alpha} \end{equation} 
can be interpreted as suitably rescaled centers of masses of configurations. More precisely, they are multiplied by the number of Bethe pseudo-particles.
The set of all relative positions 
\begin{equation} \Delta _{N,r}=\{{\bf t} \in {\N} ^r|\;  \sum_{\alpha \in \tilde r} t_{\alpha} \, = \, N   \} \end{equation} 
where  ${\bf t} ={\bf t}({\bf j})=(t_1, \ldots ,t_r)$ is the vector of relative positions with the following components:   
\begin{equation} \label{wspwzgl}  t_{\alpha }  = \begin{cases} 
j_{\alpha +1}-j_{\alpha} \;\;\;\;\;\;\; {\mathrm{for}} \;\; \; 1 \leq \alpha <r, \\
N+j_1-j_r  \,\;\;\; {\mathrm{for}} \; \;\; \alpha = r
\end{cases},
\end{equation} 
forms a lattice simplex:
\begin{equation} \Delta _{N,r}=\{{\bf t} \in {\N} ^r|\;  \sum_{\alpha \in \tilde r} t_{\alpha} \, = \, N   \}. \end{equation} 
Each configuration
 ${\bf j}\in P_{N,r}$ can be uniquely determined  by means of the aforementioned rescaled center of mass  $s=\pi _{N,r}({\bf j})$ and the relative configuration ${\bf t}({\bf j})$:
\begin{equation}  \label{odwr} j_\alpha (s,{\bf t}) = \frac{1}{r}(s+\sum_{\beta \in {\tilde r}}  \beta t_\beta) -\sum_{\alpha \leq \beta \leq r} t_\beta \; ,
\quad \alpha \in \tilde r .\end{equation}
From this we obtain the compatibility condition for 
 $s$ and ${\bf t:}$
\begin{equation} \label{zgodnosc} s+\sum_{\beta \in {\tilde r}}  \beta t_\beta \equiv 0 \mod r . \end{equation}

 The  projections $\pi _{N,r}$ resp. $\tilde{\pi} _{N,r}$  are fibrations with the total spaces spaces $Q_{N,r}$ 	resp. $P_{N,r}$ . 
We have 
\begin{equation} \label{decob} Q_{N,r} =\bigcup _{s\in {\bf Z}/N{\mathbf Z}} Q_{N,r,s}, \quad Q_{N,r,s}:=\{{\bf j}\in  Q_{N,r}| \pi _{N,r}({\bf j})=s \} \end{equation}
and
\begin{equation} \label{decot} P_{N,r} =\bigcup _{s\in {\bf Z}} P_{N,r,s},\quad P_{N,r,s}:=\{ {\bf j}\in  P_{N,r}| {\tilde \pi} _{N,r}({\bf j})=s \}.\end{equation}

In \cite{phstsol} it was shown that for any  $N\geq 2$ the configuration space $Q_{N,r}$ can be naturally embedded,  in a natural way, in a model manifold       $Q_r(S)=\{ A \subset S \,| \,\,card A =r \}.$ 
An analogous to considered above  coverings and fibrations were also given. These constructions resemble the fibrations considered by H. R. Morton \cite{Mor} and D.Tymoczko \cite{Tym1}  (cf. also  \cite{Tym})
who applied them in the theory of music chords.  They analyze configurations with repetitions which lead to orbifolds. Since we consider configurations without repetitions, our approach yields manifolds.


With these fibrations one can associate one more fibration (cf. the compatibility condition (\ref{zgodnosc})):
\begin{equation} \label{decob} \Delta_{N,r} =\bigcup _{s\in {\bf Z}/N{\mathbf Z}} \Delta_{N,r,s}, \quad \Delta_{N,r,s}:=\{ {\bf t} \in  \Delta_{N,r}| 
s+\sum_{\beta \in {\tilde r}}  \beta t_\beta \equiv 0 \mod r \}. \end{equation}
In the paper \cite{Mi} the number of elements in these fibers for some values of $N$  were computed. It was shown, using certain  natural group action, that for 
$N,r$  coprime the fibers have the same number of elements.
Similarly, for $s\neq 0 \mod r$ all fibers $\Delta_{Mp,p,s}$  are of equal cardinality  for  a rational prime 
  $p>2$ and  $M>1$. For $p=3,5$  and $s\neq 0 \mod r$, it was additionally shown that
\begin{equation} \Delta_{Mp,p,0}=\Delta_{Mp,p,s}+1.\end{equation}
In this paper we show that the above formula holds for any prime $p>2.$

We phrase the relations  in terms of the relation among coefficients of Gaussian polynomials (cf. Remark \ref{rmg} and Lemma \ref{bijection}).
The cardinalities of the fibers $|\Delta_{Mp,p,s}| , \, s=0,\dots, p-1$  follow  then  directly from Theorem \ref{therm}

 For known relations for coefficients of Gaussian polynomials and history see \cite{Ex}. In this short note we give a direct combinatorial proof of some relations between the sums of coefficients of Gaussian polynomials. We prove that if $\gcd(k,l)=1$ then the sums over  moduli classes modulo $l$ of coefficients of   Gaussian polynomials corresponding to the rectangle $k\times(l-1)$ are all equal (cf. Theorem \ref{main1}).  We also prove one case where $ \gcd (k,l)\neq 1.$ Namely for an odd prime number $p$ and $k=Mp, l=p$ we compute the corresponding sums. This result is not true when $p$ is not a prime.  
The main tool is the action of the additive group ${\mathbb Z}/p{\mathbb Z}$ on certain subsets of the restricted partitions. 
In section 2 we very briefly give necessary definitions. In section 3 we define some group actions on the sets of restricted partitions.   We also give some comments and state a general problem.

\section{Preliminaries}
In this section we set up notation and recall basic facts concerning partitions. The reader is advised to consult one of the excellent references on the subject \cite{An}, \cite{An1},\cite{Stan1}, \cite{Stan2}.
Let ${\frak p}(N,M; n)$ denote the number of partitions of a natural number $n$ into at most $M$ parts, each of size not exceeding $N$. Geometrically that means that the Young diagram corresponding to such a partition $\pi =<{\pi}_1,\dots,
{\pi}_l> \,, \quad  \, {\pi}_{1}\ge , \dots ,\ge {\pi}_{l}$ fits inside an $M\times N$ rectangle.
 The following recurrence formula holds
\begin{equation}\label{21}
{\frak p}(M,N,n)={\frak p}(M,N-1,n)+{\frak p}(M-1,N,n-N)
\end{equation}

Formula  (\ref{21}) comes from the observation that there is a bijection between the  partitions of $n$ into exactly $N$ parts of size not exceeding $M$ and the partitions of $n-N$ into at most N parts of size not exceeding $M-1.$
The bijection is given by subtracting $1$ from each part of the former.

The Gaussian coefficient may be considered as a generalization of the usual binomial coefficient and is defined by the following formula:
\begin{equation}\label{22}
G(L,K; q)={{K+L}\brack{K}}_{q}=\frac{{\prod}_{j=1}^{K+L}(1-q^j)}{{\prod}_{j=1}^{K}(1-q^j){\prod}_{j=1}^{L}(1-q^j)}
\end{equation}
The Gaussian binomial coefficient yields a generating function of restricted partitions
\begin{equation}\label{23}
{\sum}_{n=0}^{MN}{\frak p}(M,N,n)q^{n} = {{M+N}\brack{N}}_{q}
\end{equation}
In particular (\ref{23}) shows that  the Gaussian binomial coefficient is a polynomial with integer coefficients.
Equation (\ref{21})  yields the following
\begin{equation}\label{24}
{{N}\brack{M}}_{q}= {{N-1}\brack{M-1}}_{q} + q^{M}{{N-1}\brack{M}}_{q}
\end{equation}

\section{Some group actions}

Let $k,l,r$ be natural numbers and $i\in \{0,1,\dots ,r-1 \}.$
Define the following sets
\begin{equation}\label{a1}
\begin{split}
^{r}T_{k,l}^{i}:= & \, \{ \,{\text{\rm the set of restricted partitions of a number congruent to}}\quad  i\mod r  \\
 & {\text{\rm that fit into the rectangle  \,\, $k\times l$  \,  }}\}
 \end{split}
\end{equation}

Let 
\begin{equation}\label{aaa}
T_{k,l}={^r}T_{k,l}^{0}\sqcup {^r}T_{k,l}^{1}\sqcup \dots \sqcup {^r}T_{k,l}^{r-1}
\end{equation}

be the set of partitions that fit into the rectangle $k\times l.$ Notice that $T_{k,l}$ is independent of $r$ and of course $|T_{k,l}|=|T_{l,k}|.$
Let further 
\begin{equation}\label{4}
{^r}{\Omega}^{i}_{k,l}=\{ {\text{\small\rm non-increasing  surjections}} \,\, f\hspace{-2pt}: [0,k] \rightarrow \{ 1,\dots , l \} :  {\int}_{0}^{k}f dx \equiv \hspace{-4pt}\,i \hspace{-8pt}\mod \, r        \}.
\end{equation}
and ${\Omega}_{k,l}={^r}{\Omega}^{0}_{k,l}\sqcup \dots \sqcup {^r}{\Omega}^{r-1}_{k,l}.$  In ({\ref{4}}) $[0,k] \subset {\mathbb R}$ denotes the closed interval. 
\begin{remark}\label{rmg0}
Notice that  
${\Omega}_{k,l}$ is the set of all non-increasing surjections $f: [0,k] \rightarrow  \{ 1,\dots , l \}$ and thus is independent of $r.$
\end{remark}
\begin{remark}\label{rmg}
Notice that by (\ref{decob})   we have  a bijection between ${\Delta}_{N,r,s}$ and ${^r}{\Omega}^{r-s}_{N,r}$  and therefore $|{\Delta}_{N,r,s}| = |{^r}{\Omega}^{r-s}_{N,r}|.$
\end{remark}
We start with the following lemma.
\begin{lemma}\label{bijection}
Let ${m=l(l-1)/2}+k+l +i$ There is a bijection between the following sets
\begin{equation}\label{bb}
f: {^r}{\Omega}_{k+l,l}^{m} \rightarrow {^r}T_{k,l-1}^{i}
\end{equation}

\end{lemma}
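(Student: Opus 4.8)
The plan is to read each non-increasing surjection off its level sets and match it with a partition by a uniform shift. Given $g\in {^r}\Omega_{k+l,l}^{m}$, let $a_v=\mathrm{length}\{x\in[0,k+l]:g(x)=v\}$ for $v=1,\dots,l$. Since $g$ is non-increasing and integer-valued its graph is a descending staircase with integer breakpoints, surjectivity is exactly the condition $a_v\geq 1$ for every $v$, and one has $\sum_{v=1}^{l}a_v=k+l$ together with $\int_0^{k+l}g\,dx=\sum_{v=1}^{l}v\,a_v$. Thus $g$ is completely encoded by the composition $(a_1,\dots,a_l)$ of $k+l$ into $l$ positive parts, in the same spirit as the composition model underlying Remark \ref{rmg}.

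First I would define the map. Put $b_v=a_v-1\geq 0$, so that $\sum_{v=1}^{l}b_v=k$, and let $f(g)=\pi$ be the partition whose Young diagram has exactly $b_v$ columns of height $v-1$. Listing columns from tallest to shortest gives the weakly decreasing heights $(l-1)^{b_l},(l-2)^{b_{l-1}},\dots,1^{b_2}$, so $\pi$ is a genuine partition with at most $b_2+\dots+b_l\leq k$ positive columns and all heights $\leq l-1$; hence it fits into the $k\times(l-1)$ rectangle (orientation is immaterial, since $|T_{k,l}|=|T_{l,k}|$). The inverse sends a partition $\pi$ in that rectangle to its column-height multiplicities $b_v$, padded with height-$0$ columns so the total is $k$, and then sets $a_v=b_v+1$; this is manifestly two-sided, so $f$ is a bijection between the underlying sets $\Omega_{k+l,l}$ and $T_{k,l-1}$.

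It remains to check that $f$ carries the congruence class $m$ to the class $i$. From the encoding, $|\pi|=\sum_{v=1}^{l}(v-1)b_v=\sum_{v=1}^{l}v\,b_v-k$, while $\int_0^{k+l}g\,dx=\sum_{v=1}^{l}v(b_v+1)=\sum_{v=1}^{l}v\,b_v+\tfrac{l(l+1)}{2}$. Eliminating $\sum_v v\,b_v$ yields the key identity $\int_0^{k+l}g\,dx=|\pi|+k+\tfrac{l(l+1)}{2}$. Since $\tfrac{l(l+1)}{2}-\tfrac{l(l-1)}{2}=l$, this rearranges to $\int g-m=|\pi|-i$, so $\int g\equiv m\pmod r$ exactly when $|\pi|\equiv i\pmod r$. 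Therefore $f$ restricts to the claimed bijection ${^r}\Omega_{k+l,l}^{m}\to {^r}T_{k,l-1}^{i}$.

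I expect the only real obstacle to be bookkeeping rather than ideas: fixing the rectangle orientation and the surjectivity-versus-positivity convention consistently (both harmless, the former by conjugation and the latter because integer breakpoints force $a_v\geq 1$), and above all tracking the additive constant $k+\tfrac{l(l+1)}{2}$ that converts $\int g$ into $|\pi|$, since it is precisely this constant that dictates the shift $m=\tfrac{l(l-1)}{2}+k+l+i$ appearing in the statement.
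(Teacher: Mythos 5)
Your proof is correct and takes essentially the same route as the paper: your level-set lengths $a_v$ are exactly the paper's composition $(n_1,\dots,n_l)$ of (\ref{seqn}), and the partition you build from the multiplicities $b_v=a_v-1$ (column heights $(l-1)^{b_l},\dots,1^{b_2}$) is just the conjugate reading of the paper's $\langle t_2-(l-1),\dots,t_l-1\rangle$, i.e.\ the same Young diagram in the $k\times(l-1)$ box. If anything, your write-up is more complete than the paper's, since you verify explicitly the identity $\int_0^{k+l} g\,dx=|\pi|+k+l+\tfrac{l(l-1)}{2}$ that produces the shift $m=\tfrac{l(l-1)}{2}+k+l+i$, a step the paper leaves implicit.
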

\begin{proof}
Any non-increasing surjection in ${\Omega}_{k+l,l}$ is uniquely defined by a sequence of natural numbers $0< t_{l} < t_{l-1} < \dots < t_{2} < t_{1} = k+l .$
Order these sequences lexicographically.  Among them there is a minimal sequence (in the left diagram of  Fig.1  the minimal surjection corresponds to the grey region, for k= 11, l=4 ). There is also a maximal element which corresponds to the border of the crossed region.
Any non-increasing surjection has to have its graph inside the region between the grey and crossed regions. In Fig.1 we depicted the surjection corresponding to the sequence $(t_{4}, t_{3}, t_{2}, t_{1})=(4,5,7,15 ).$ We define the map $f$ in the following way
\begin{equation}\label{cor}
f( (t_{l}, \dots , t_{2}, t_{1}))=< t_{2}-(l-1), \dots , (t_{l}-1)> = <{\pi}_{1},\dots {\pi}_{l-1}>
\end{equation}
where the sequence in the brackets denotes the partition assigned to a non-increasing surjection (cf. Fig.1). 
This gives the bijection (\ref{bb}) and also a bijection between ${\Omega}_{k+l,l}$ and $T_{k,l-1}.$
\vspace{5mm}

\vspace{5mm}
\resizebox{\textwidth}{!}{ 
\begin{tikzpicture}

\node (n) 
{\begin{varwidth}{35cm}
{

\begin{ytableau}
*(lightgray) \bullet& \bullet& \bullet &\bullet & & & & & & & & & {\times}& {\times}& {\times} &\none &\bullet& \bullet&\bullet  &  && & & & & & \\
*(lightgray) \bullet & *(lightgray)  \bullet &\bullet &\bullet &\bullet   & & & & & & & & &{\times} &{\times} &\none &\bullet  &\bullet & \bullet &  & & & & & & &\\
*(lightgray) \bullet    & *(lightgray)  \bullet & *(lightgray) \bullet  &\bullet &\bullet   &\bullet &\bullet & & & & & & & &{\times}  &\none &\bullet& \bullet& \bullet &\bullet&& & & & &  & \\
 *(lightgray)   \bullet  & *(lightgray) \bullet  & *(lightgray) \bullet  &*(lightgray)  \bullet &*(lightgray)  \bullet   &*(lightgray)  \bullet  &*(lightgray)  \bullet  &*(lightgray) \bullet   & *(lightgray)  \bullet & *(lightgray) \bullet  & *(lightgray)  \bullet & *(lightgray)  \bullet &*(lightgray)  \bullet  & *(lightgray) \bullet  &*(lightgray)  \bullet &\none  &\none &\none&\none & \none & \none& \none& \none&\none & \none &\none \\
\end{ytableau}}
\end{varwidth}
};
  \end{tikzpicture}

  }
\vspace{5mm}

\centerline{ Fig.1 }

\end{proof}

Notice that the partition $\bf 0 \in\rm {^r}T_{k,l}^0$, for any $r$, and  corresponds to the minimal surjection.

\begin{corollary}\label{cr1}
The number of elements in the set $T_{k,l}$
\begin{equation}\label{no}
|T_{k,l}|= {{k+l}\choose {l}}
\end{equation}
\end{corollary}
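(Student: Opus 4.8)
The quantity $|T_{k,l}|$ is simply the total number of partitions fitting in a $k\times l$ rectangle, summed over all values being partitioned, so the plan is to count these directly using the bijection just established. I would apply Lemma \ref{bijection} with the pair $(k,l+1)$ in place of $(k,l)$, and take the disjoint union over all residue classes $i$ (equivalently over all $m$): the graded bijections $f\colon {^r}{\Omega}^{m}_{k+l+1,l+1}\to {^r}T^{i}_{k,l}$ then assemble into a single bijection ${\Omega}_{k+l+1,l+1}\to T_{k,l}$, exactly the ungraded bijection asserted at the end of the lemma's proof. Hence $|T_{k,l}|=|{\Omega}_{k+l+1,l+1}|$, and it remains only to enumerate the non-increasing surjections.

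Next I would count ${\Omega}_{k+l+1,l+1}$ using the encoding from the proof of the lemma: each non-increasing surjection $[0,k+l+1]\to\{1,\dots,l+1\}$ is uniquely recorded by the strictly increasing sequence of jump abscissae $0<t_{l+1}<\dots<t_{2}<t_{1}=k+l+1$. Since the top endpoint $t_{1}=k+l+1$ is forced by surjectivity together with the non-increasing condition, such a surjection is the same datum as an unordered choice of the remaining $l$ values $t_{2},\dots,t_{l+1}$ among the integers $\{1,\dots,k+l\}$. There are exactly $\binom{k+l}{l}$ such choices, so $|{\Omega}_{k+l+1,l+1}|=\binom{k+l}{l}$, and combined with the previous paragraph this gives $|T_{k,l}|=\binom{k+l}{l}$.

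As an independent consistency check I would verify the count against the generating function: by (\ref{23}) we have $|T_{k,l}|=\sum_{n}{\frak p}(k,l,n)=\left.{{k+l}\brack{l}}_{q}\right|_{q=1}$, and the standard fact that a Gaussian binomial coefficient specializes at $q=1$ to the ordinary binomial coefficient recovers $\binom{k+l}{l}$ without reference to the lemma.

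The argument carries no serious obstacle; the only points requiring care are bookkeeping. First, one must confirm that the residue-graded bijections of Lemma \ref{bijection} really do glue into a bijection of the ungraded sets ${\Omega}_{k+l+1,l+1}$ and $T_{k,l}$; this is immediate from the disjoint-union decomposition (\ref{aaa}) together with Remark \ref{rmg0}, since both sides decompose compatibly over the residue classes. Second, one must keep the index shift $l\mapsto l+1$ straight, so that the rectangle $k\times(l-1)$ appearing in the lemma corresponds to the intended rectangle $k\times l$ of the corollary.
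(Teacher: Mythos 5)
Your proof is correct and follows essentially the same route as the paper: applying Lemma \ref{bijection} with $l+1$ in place of $l$ to get $|T_{k,l}|=|\Omega_{k+l+1,l+1}|$, then counting the jump sequences $0<t_{l+1}<\dots<t_2<k+l+1$ as $\binom{k+l}{l}$ choices. Your extra consistency check via the Gaussian binomial at $q=1$ is a nice touch but not part of the paper's argument, which is exactly your first two paragraphs.
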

\begin{proof}
By Lemma \ref{bijection} $|T_{k,l}|=|{\Omega}_{k+l+1,l+1}|$ and elements of ${\Omega}_{k+l+1,l+1}$ correspond bijectively to the sequences $0 < t_{l+1} < t_{l}<\dots < t_{2} < k+l+1$
\end{proof}

Let a sequence $0<t_l < t_{l-1} < \dots <t_2 <k+l$ correspond to an element $f\in {\Omega}_{k+l,l}$ as in the proof of Corollary \ref{cr1} .

Let
 \begin{equation}\label{seqn}
 (n_{1}, n_{2}, \dots , n_{l})\quad 
 \text{\rm where}\quad
 n_1=t_{l}, n_{2}=t_{l-1}-t_{l}, \,\dots ,\,n_{l}=k+l-t_2 
 \end{equation}
  be the corresponding sequence of lengths of   subintervals of $[0,k+l].$ Notice that assignment (\ref{seqn}) gives a bijection between the following sets
  of sequences
  \begin{equation}\label{tseq}
  (t_{2}, \dots ,t_{l})\quad 
  \text{\rm such that}\quad
  0<t_l < t_{l-1} < \dots <t_2 <k+l 
\end{equation}
and
 \begin{equation}\label{tseq}
  (n_{1}, \dots ,n_{l})\quad 
  \text{\rm such that}\quad
  n_{i}>0\quad
   \text{\rm and}\quad
  n_{1}+n_{2}+\dots n_{l}=k+l.
\end{equation}
We have the following  group actions on the sets ${\Omega}_{k+l,l}$. 
We define them on the sequences (\ref{seqn}).

{\bf Action of the cyclic group ${\mathbb Z}/{l{\mathbb Z}}$}

Let $C$ be a generator of the cyclic group ${\mathbb Z}/{l{\mathbb Z}}.$ We define the action of ${\mathbb Z}/{l{\mathbb Z}}$ on the sequence
$ (n_1, n_2, \dots , n_l)$ by the following  formula
\begin{equation}\label{act1}
C{\cdot} (n_1, n_2, \dots , n_l) = (n_{l}, n_1, \dots , n_{l-1})
\end{equation}

{\bf Action of the group of units $({\mathbb Z}/{l{\mathbb Z}})^{*}$}
 
Let $u\in ({\mathbb Z}/{l{\mathbb Z}})^{*}.$ Define the action of $u$ on the sequence $(n_1,\dots n_{l})$  by the following formula
\begin{equation}\label{act2}
u{\cdot} (n_1, n_2, \dots , n_l) = (n_{u^{-1}\cdot 1}, n_{u^{-1}\cdot 2}, \dots , n_{u^{-1}\cdot l})
\end{equation}

{\bf Action of the symmetric group ${\cal S}_{l}$}

Let ${\cal S}_{l}$ be the symmetric group on $l$ letters we define the action of ${\sigma}\in {\cal S}_{l}$ on a sequence $(n_{1}, n_{2},\dots , n_{l})$
by the formula 
\begin{equation}\label{act3}
{\sigma}\cdot (n_{1}, n_{2}, \dots , n_{l}) = (n_{{\sigma}(1)}, n_{{\sigma}(2)}, \dots n_{{\sigma}(l)})
\end{equation}

These actions, via bijection $g: {\Omega}_{k+l,l} \rightarrow T_{k,l-1}$ of Lemma \ref{bijection}, yield the corresponding action on the set of restricted partitions $T_{k,l-1}$. This transfer of actions is given by the following formula
\begin{equation}\label{transfer}
{\alpha}\cdot<{\pi} _{1}, \dots , {\pi}_{v} >  := g ({\alpha} \cdot g^{-1}( <{\pi} _{1}, \dots , {\pi}_{v} > )),
\end{equation}
where $\alpha$ is an element of an appropriate group.

These  actions on the sets of restricted partitions are of their own interest. We shall use the action (\ref{act1}) for obtaining some equalities concerning coefficients of Gaussian polynomials. We shall also use a different  action of the additive group ${\mathbb Z}/p{\mathbb Z}$ on certain subsets of restricted partitions to obtain similar equalities for odd prime numbers.

\section{Main results}
We have the following lemma 
\begin{lemma}\label{lm1}
Let  $k,l>0$ be natural numbers. If $\gcd(k,l)=1$ then for any $i,j \in {\mathbb Z}/l{\mathbb Z}$ we have $|{^l}{\Omega}_{k+l,l}^{i}|=|{^l}{\Omega}_{k+l,l}^{j}|.$
\end{lemma}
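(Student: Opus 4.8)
The plan is to exploit the cyclic action (\ref{act1}) of ${\mathbb Z}/l{\mathbb Z}$ on $\Omega_{k+l,l}$ and to show that its generator $C$ permutes the level sets ${}^l\Omega^{i}_{k+l,l}$ cyclically through all residues, which immediately forces them to have equal cardinality. Since $C$ is an element of a group action it is invertible, so any statement that $C$ maps one level set onto another is automatically a bijection of finite sets.

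First I would pass, via the bijection (\ref{seqn})--(\ref{tseq}), from a surjection $f\in \Omega_{k+l,l}$ to its sequence of subinterval lengths $(n_1,\dots,n_l)$, a composition of $k+l$ into $l$ positive parts. On the $j$-th subinterval the surjection takes the constant value $l+1-j$, so the defining integral in (\ref{4}) becomes the explicit weighted sum
\[
S(f)=\int_{0}^{k+l} f\,dx=\sum_{j=1}^{l}(l+1-j)\,n_j,
\]
and the superscript $i$ is precisely the class $S(f)\bmod l$.

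The crux of the argument is to track how $S$ transforms under the generator $C$. Writing $C\cdot(n_1,\dots,n_l)=(n_l,n_1,\dots,n_{l-1})$ and substituting into the weighted sum, a short reindexing (using $\sum_{j=1}^{l-1}n_j=(k+l)-n_l$) gives $S(C\cdot f)-S(f)=l\,n_l-(k+l)$, hence
\[
S(C\cdot f)\equiv S(f)-k \pmod{l}.
\]
Thus $C$ sends ${}^l\Omega^{i}_{k+l,l}$ bijectively onto ${}^l\Omega^{\,i-k}_{k+l,l}$.

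Finally, because $\gcd(k,l)=1$ the residue $-k$ is a unit and therefore generates ${\mathbb Z}/l{\mathbb Z}$, so the iterates $C^{0},C^{1},\dots,C^{l-1}$ shift the class $i$ through every element of ${\mathbb Z}/l{\mathbb Z}$. Composing the appropriate power of $C$ then yields an explicit bijection between any two level sets ${}^l\Omega^{i}_{k+l,l}$ and ${}^l\Omega^{j}_{k+l,l}$, giving $|{}^l\Omega^{i}_{k+l,l}|=|{}^l\Omega^{j}_{k+l,l}|$. The only step demanding genuine care is the reindexing in the computation of $S(C\cdot f)-S(f)$; the coprimality hypothesis enters purely to guarantee that the single shift by $-k$ generates the whole group, and everything else is formal once the weighted-sum formula for the integral is established.
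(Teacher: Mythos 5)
Your proposal is correct and follows essentially the same route as the paper: both use the cyclic action (\ref{act1}) on the composition $(n_1,\dots,n_l)$, compute that the integral $\sum_{j}(l+1-j)n_j$ shifts by $-(k+l)\equiv -k \pmod l$ under the generator $C$, and invoke $\gcd(k,l)=1$ so that iterates of $C$ biject any level set ${}^l\Omega^{i}_{k+l,l}$ onto any other. Your write-up merely makes explicit the final step (composing a power of $C$ to get the bijection) that the paper leaves implicit.
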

\begin{proof}
We use the action (\ref{act1}). Notice that for $f\in {^l}{\Omega}_{k+l,l}$ we have $ {\int}_{0}^{k+l}f dx = n_{1}\cdot l +n_{2}\cdot(l-1)+\dots +n_{l} .$ 
Thus if $n_{1}\cdot l +n_{2}\cdot(l-1)+\dots +n_{l} \equiv \, i \, \mod \, l$ then 
\begin{equation}\label{cng1}
n_{l}\cdot l +n_{1}\cdot (l-1) + \dots + n_{l-1} \equiv n_{1}\cdot l +n_{2}\cdot(l-1)+\dots +n_{l} -(n_{1}+\dots + n_{l}) 
\end{equation}
$$\equiv i-(k+l) \, \mod \, l .$$
Since $\gcd(k+l,l)=1,$ we see that the  set of congruence classes of integrals of the functions in the orbit $\{f, Cf, \dots C^{l-1}f\}$  correspond bijectively 
to ${\mathbb Z}/l{\mathbb Z}.$ This proves the lemma.
\end{proof}
Notice that Lemma \ref{lm1}, although formulated in different context, is similar to Lemma 1 of \cite{Mi}. 

\begin{corollary}\label{corro1}
Let  $k,l>0$ be natural numbers such that $\gcd(k,l)=1$ and let $r$ be a divisor of $l.$ Then $|{^r}{\Omega}_{k,l}^{i}|=|{^r}{\Omega}_{k,l}^{j}|$ for any 
$i,j \in {\mathbb Z}/r{\mathbb Z}.$
\end{corollary}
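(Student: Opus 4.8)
The plan is to reuse, essentially verbatim, the cyclic-group mechanism of Lemma \ref{lm1}, but to track the integral of a surjection modulo the divisor $r$ rather than modulo $l$. First I would parametrize $\Omega_{k,l}$ by the length sequences $(n_1,\dots,n_l)$ with $n_i>0$ and $n_1+\dots+n_l=k$, exactly as in (\ref{seqn}), so that $\int_0^k f\,dx=\sum_{i=1}^l (l-i+1)\,n_i$, and let $C$ be the generator of $\mathbb{Z}/l\mathbb{Z}$ acting by the cyclic shift (\ref{act1}). A short computation (the analogue of (\ref{cng1})) gives
\begin{equation*}
\int_0^k (C\cdot f)\,dx-\int_0^k f\,dx = l\,n_l-k .
\end{equation*}
Because $r\mid l$, the term $l\,n_l$ is divisible by $r$, so $C$ shifts the residue of $\int_0^k f\,dx$ by the constant $-k$ modulo $r$, uniformly in $f$.

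Next I would record the elementary but essential number-theoretic point that $\gcd(k,r)=1$: any common divisor of $k$ and $r$ divides $l$ as well (since $r\mid l$), hence divides $\gcd(k,l)=1$. Consequently $-k$ is a unit in $\mathbb{Z}/r\mathbb{Z}$, and the residues $-tk$, for $t=0,1,\dots,r-1$, exhaust $\mathbb{Z}/r\mathbb{Z}$. From the displayed shift it then follows, by iteration, that for every integer $t$ the bijection $C^t$ of $\Omega_{k,l}$ carries ${^r}{\Omega}^{i}_{k,l}$ into ${^r}{\Omega}^{i-tk}_{k,l}$ for each $i$.

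To conclude, given $i,j\in\mathbb{Z}/r\mathbb{Z}$ I would pick $t$ with $i-tk\equiv j\pmod r$, which is solvable since $k$ is invertible mod $r$. As $C^t$ is a bijection of the whole set $\Omega_{k,l}$ that sends each fiber ${^r}{\Omega}^{a}_{k,l}$ into ${^r}{\Omega}^{a-tk}_{k,l}$, and these fibers partition $\Omega_{k,l}$, each such inclusion is forced to be an equality; in particular $C^t$ restricts to a bijection ${^r}{\Omega}^{i}_{k,l}\to{^r}{\Omega}^{j}_{k,l}$, giving $|{^r}{\Omega}^{i}_{k,l}|=|{^r}{\Omega}^{j}_{k,l}|$. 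The one place that needs genuine care is the passage from modulus $l$ to modulus $r$: the shift $l\,n_l-k$ is constant only after reducing mod $r$ (using $r\mid l$), and it is the coprimality $\gcd(k,r)=1$, not $\gcd(k,l)=1$, that makes $C$ act transitively on the $r$ fibers. An equivalent route would be to first deduce equidistribution of $\Omega_{k,l}$ modulo $l$ by the argument of Lemma \ref{lm1} and then coarsen, observing that each class mod $r$ is a disjoint union of exactly $l/r$ equinumerous classes mod $l$; I prefer the direct version above because it sidesteps the index shift between $\Omega_{k+l,l}$ and $\Omega_{k,l}$.
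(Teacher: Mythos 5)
Your proof is correct, and its engine is the same cyclic-shift action (\ref{act1}) that powers Lemma \ref{lm1}; the difference is in the packaging. The paper's proof of the corollary is a two-step reduction: invoke Lemma \ref{lm1} to get equidistribution of the fibers modulo $l$, then push forward along the reduction map $\mathbb{Z}/l\mathbb{Z}\to\mathbb{Z}/r\mathbb{Z}$, each class mod $r$ being a disjoint union of $l/r$ equinumerous classes mod $l$ --- exactly the ``equivalent route'' you mention and set aside. Your direct version instead tracks the residue of $\int_0^k f\,dx$ modulo $r$ from the start: the computation $\int_0^k (C\cdot f)\,dx - \int_0^k f\,dx = l\,n_l - k$ is the correct analogue of (\ref{cng1}), the hypothesis $r\mid l$ kills the $l\,n_l$ term mod $r$, and your observation that $\gcd(k,r)=1$ (any common divisor of $k$ and $r$ divides $l$, hence divides $\gcd(k,l)=1$) makes the constant shift $-k$ a generator of the additive group $\mathbb{Z}/r\mathbb{Z}$, so the powers of $C$ permute the $r$ fibers transitively. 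What your route buys is worth noting: Lemma \ref{lm1} is stated for ${^l}\Omega_{k+l,l}^{i}$, so to apply it to $\Omega_{k,l}$ as the paper's one-line proof implicitly does, one must substitute $k-l$ for $k$ and check $\gcd(k-l,l)=\gcd(k,l)=1$ --- an index shift the paper leaves unstated and your self-contained argument sidesteps, while also isolating that the operative coprimality is $\gcd(k,r)=1$ rather than the full $\gcd(k,l)=1$. Your closing care about why $C^t$ restricts to a bijection between fibers (a bijection of $\Omega_{k,l}$ carrying each fiber of a finite partition into another forces all these inclusions to be equalities) is also sound; the paper's route gets the same conclusion wholesale from the mod-$l$ statement. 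Both arguments silently cover the degenerate case $k<l$, where all the sets are empty.
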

\begin{proof}
Apply Lemma \ref{lm1} and the reduction map ${\mathbb Z}/l{\mathbb Z}\rightarrow {\mathbb Z}/r{\mathbb Z}.$
\end{proof}

\begin{theorem}\label{main1}
Let $k,l$ be positive integers such that $\gcd(k,l)=1$ and $r$ any positive divisor of $l.$ For any $i\in {\mathbb Z}/r{\mathbb Z}$ we have
\begin{equation}\label{sumar}
{\sum}_{m=0}^{k(l-1)}{\frak p}(k, l-1, j\equiv  i  \mod\, r) = {\frac{1}{r}}{{k+l-1}\choose{l-1}} 
\end{equation}
\end{theorem}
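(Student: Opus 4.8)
The plan is to recognize the left-hand side of (\ref{sumar}) as the cardinality of a single graded piece ${^r}T^{i}_{k,l-1}$, transport this to the surjection model via Lemma \ref{bijection}, apply the equidistribution of Lemma \ref{lm1}, and finally divide the total count of Corollary \ref{cr1} by $r$. First I would unwind the notation: by the generating function (\ref{23}) the coefficient $\mathfrak{p}(k,l-1,m)$ is exactly the number of partitions of $m$ fitting into the $k\times(l-1)$ rectangle, so summing over the exponents in a fixed residue class gives
\[
\sum_{\substack{0\le m\le k(l-1)\\ m\equiv i\ (\mathrm{mod}\ r)}}\mathfrak{p}(k,l-1,m)=\bigl|{^r}T^{i}_{k,l-1}\bigr|,
\]
by the definition (\ref{a1}) of ${^r}T^{i}_{k,l-1}$. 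Thus the theorem is equivalent to the assertion that the $r$ numbers $|{^r}T^{i}_{k,l-1}|$, for $i\in\Z/r\Z$, are all equal.

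Next I would transport the grading to the surjection side. Lemma \ref{bijection} provides, for each $i$, a bijection ${^r}\Omega^{m_i}_{k+l,l}\to{^r}T^{i}_{k,l-1}$ with $m_i=l(l-1)/2+k+l+i$. Since $i\mapsto m_i$ is merely a translation by the constant $l(l-1)/2+k+l$, it induces a bijection of $\Z/r\Z$ onto itself; hence the multiset $\{\,|{^r}T^{i}_{k,l-1}|:i\in\Z/r\Z\,\}$ coincides with $\{\,|{^r}\Omega^{a}_{k+l,l}|:a\in\Z/r\Z\,\}$. It therefore suffices to show that the $\Z/r\Z$-graded pieces of $\Omega_{k+l,l}$ are equinumerous.

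This is the conceptual heart, and it is already supplied by the earlier material. Since $\gcd(k+l,l)=\gcd(k,l)=1$, Lemma \ref{lm1} shows that the cyclic action (\ref{act1}) moves every orbit through all residue classes modulo $l$, so that $|{^l}\Omega^{a}_{k+l,l}|$ is independent of $a\in\Z/l\Z$. Because $r\mid l$, reducing modulo $r$ exactly as in Corollary \ref{corro1} writes each ${^r}\Omega^{a}_{k+l,l}$ as a disjoint union of $l/r$ equinumerous pieces ${^l}\Omega^{b}_{k+l,l}$, so all $|{^r}\Omega^{a}_{k+l,l}|$ coincide; by the previous paragraph all $|{^r}T^{i}_{k,l-1}|$ coincide as well.

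Finally I would sum over the $r$ residue classes. The disjoint decomposition (\ref{aaa}) and Corollary \ref{cr1} give $\sum_{i\in\Z/r\Z}|{^r}T^{i}_{k,l-1}|=|T_{k,l-1}|=\binom{k+l-1}{l-1}$, and since the $r$ summands are equal, each equals $\tfrac1r\binom{k+l-1}{l-1}$, which is precisely (\ref{sumar}). The only genuine obstacle is the equidistribution step, namely that a single cyclic orbit meets every class modulo $l$ exactly once; this is where the coprimality $\gcd(k+l,l)=1$ is indispensable, and it is exactly the content of Lemma \ref{lm1}. Everything else is bookkeeping: tracking the constant shift $l(l-1)/2+k+l$ through Lemma \ref{bijection} and checking that reduction modulo $r$ respects the orbit partition.
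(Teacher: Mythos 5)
Your proof is correct and takes essentially the same route as the paper's: identify the left-hand side with $|{^r}T^{i}_{k,l-1}|$, transport the grading through the bijection of Lemma \ref{bijection}, invoke the equidistribution of Lemma \ref{lm1} via Corollary \ref{corro1}, and divide the total count of Corollary \ref{cr1} by $r$. Your write-up merely makes explicit the bookkeeping the paper leaves implicit, namely tracking the constant shift $l(l-1)/2+k+l$ modulo $r$ and the observation that $\gcd(k+l,l)=\gcd(k,l)=1$.
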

\begin{proof}
Corollary \ref{corro1} and Lemma \ref{bijection} show that under the assumption of the theorem for any $i,j \in {\mathbb Z}/r{\mathbb Z}$ 
we have $|{^r}T_{k,l-1}^{i}|=|{^r}T_{k,l-1}^{j}|.$ The left hand side of (\ref{sumar}) is equal to $|{^r}T_{k,l-1}^{i}|.$ Now the theorem follows 
from Corollary \ref{cr1}.
\end{proof}
\begin{example}\label{ex0}
For $k=3,l=4$ and $r=4$ we have $T_{k,l-1}^{i}=5$ for $i=0,1,2,3.$
\end{example}
\begin{example}\label{ex10}
For $k=5,l=3$ and $r=3$ we have $T_{k,l-1}^{i}=7$ for $i=0,1,2.$
\end{example}

\begin{remark}\label{remk}
Notice that the left hand side of (\ref{sumar}) is the sum ${\sum}_{{n\equiv r} ({\text{mod}} p)} a_{n}$ over the  moduli classes modulo $r$ of the coefficients  of the Gaussian polynomial $G(k, l-1, q)={\sum}_{n=0}^{k(l-1)}a_{n}q^{n} .$
\end{remark}

\begin{remark}\label{remk1}
Theorem \ref{main1} is false if $\gcd(k,l)\neq 1$ as the following examples show.
\end{remark}
\begin{example}\label{ex1}
Let $k=l=6$ and $r=6.$ We have
$$ T_{6.5}^{0}=80, \,\,T_{6.5}^{1} = 75, \,\, T_{6.5}^{2}=78, \,\, T_{6.5}^{3}=76, \,\, T_{6.5}^{4}=78, \,\, T_{6.5}^{5}=75.$$
\end{example}
\begin{example}\label{ex2}
Let $k=20, l=10$ and $r=10.$ We have
$$ T_{10,.9}^{0}=9252, \,\,T_{10,9}^{1} = 9225, \,\, T_{10,9}^{2}=9250, \,\, T_{10,9}^{3}=9225, \,\, T_{10.9}^{4}=9250, $$
$$ T_{10.9}^{5}=9226,\,\, T_{10,.9}^{6}=9250, \,\,T_{10,9}^{7} = 9225, \,\, T_{10,9}^{8}=9250, \,\, T_{10,9}^{9}=9225.$$
\end{example}
\begin{remark}\label{nrem}
Symmetries in the numbers $T_{k,l}^j$ in Examples \ref{ex1} and \ref{ex2} give us some information about orbits of the action of the group $({\mathbb Z}/l{\mathbb Z})*$ (cf. (\ref{act2}) and (\ref{transfer})).  In Example \ref{ex1}  $({\mathbb Z}/6{\mathbb Z})^{*}\cong {\mathbb Z}/2{\mathbb Z}$ and we have 
$80 +76$ orbits of length $1$ and $75+78$ orbits of length $2.$ Similarly, in Example \ref{ex2}  $({\mathbb Z}/10{\mathbb Z})^{*}\cong {\mathbb Z}/4{\mathbb Z}$ and we see that we have $9252+9226$ orbits of length $1$ and $9925+9250$ orbits of length $4.$ The fact that $T_{k,l-1}^{j}=T_{k,l-1}^{l-j}$,
which can easily be justified by considering  complement Young diagrams or symmetry in the formula (\ref{22}).
\end{remark}

Examples \ref{ex1} and \ref{ex2} show that when $\gcd(k,l)\neq1$ the numbers $T_{k,l}^j$ might significantly differ, although one expects the natural densities to be equal once we take their limits as $k,l\rightarrow \infty .$
We can compute  only some cases where $\gcd(k,l)\neq1.$
Let $p$ be an odd prime number. We are able to compute the numbers $T_{Mp,p-1}^{j}$ for $j=0,1,\dots ,p-1.$
We start with  a key lemma which is a consequence of the action of an additive group ${\mathbb Z}/p{\mathbb Z}$ on some subsets of restricted partitions.

From now on, to ease the notation, we shall skip the superscript $p$ and write $T_{k,l}^{i}$ instead of ${^p}T_{k,l}^{i}.$

Define
\begin{equation}\label{a0}
\begin{split}
S_{k,l}^{i}:= & \, \{ \,{\text{\rm the set of restricted partitions  of a number congruent to}}\quad  i\mod p \, \\
 & {\text{\rm into exactly $k$ parts  that fit into the rectangle\,\, $k\times l$ }} \}
 \end{split}
\end{equation}
Of course, we have the following equalities
\begin{equation}\label{ttt}
T_{k,l}^{i}=S_{1,l}^{i}\sqcup S_{2,l}^{i} \sqcup \dots \sqcup S_{k,l}^{i}  
\end{equation}
for any $i\in ({\mathbb Z}/p{\mathbb Z})^{*}$ and
\begin{equation}\label{ttt1}
T_{k,l}^{0}= {\bf 0}\sqcup S_{1,l}^{0}\sqcup S_{2,l}^{0} \sqcup \dots \sqcup S_{k,l}^{0}  
\end{equation}

\begin{lemma}\label{thmp}
 The following equalities hold 
\begin{equation}\label{a111}
|S_{1,p-1}^{0}|=0 \,\, {\text{\rm and}} \,\, |S_{1,p-1}^{j}|=1 \,\, {\text{\rm for}} \,\, 0\neq j\in{\mathbb Z}/p{\mathbb Z},
\end{equation}
\begin{equation}\label{a11}
|S_{k,p-1}^{i}|=|S_{k,p-1}^{j}| \,\,{\text{\rm for}}\,\,  i,j \in {\mathbb Z}/p{\mathbb Z} \,\, {\text{\rm and}} \,\,  2\leq k \leq p-1
\end{equation}

\begin{equation}\label{a122}
|S_{k,Mp}^{0}|=|S_{k,Mp}^{j}| \,\, {\text{\rm for}}\,\,  j \in {\mathbb Z}/p{\mathbb Z} \,\, {\text{\rm any}} \,\, M \geq 1 \,\,{\text{\rm and}} \,\,  1\leq k \leq p-1
\end{equation}

\end{lemma}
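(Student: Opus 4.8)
The plan is to read $S_{k,l}^{i}$ concretely as the set of size-$k$ multisets drawn from $\{1,\dots,l\}$ whose sum is $\equiv i \bmod p$ (equivalently, a partition into exactly $k$ parts, each part in $\{1,\dots,l\}$), and to treat each of the three equalities by producing a $\Z/p\Z$-action that shifts this sum by a fixed nonzero residue. The uniform mechanism is this: whenever I have a bijection of the value set that changes every total by a constant $c\bmod p$ with $\gcd(c,p)=1$, iterating it permutes the fibers $S^{0},\dots,S^{p-1}$ cyclically, so they all have equal cardinality. I will never need explicit counts, only the existence of such an action.

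For (\ref{a122}) the value set is $\{1,\dots,Mp\}$. I would define the cyclic shift $\phi(v)=v+1$ for $v<Mp$ and $\phi(Mp)=1$, applied to every part. Since $\phi$ is a bijection of $\{1,\dots,Mp\}$ it sends a size-$k$ multiset to a size-$k$ multiset (exactly $k$ parts, none equal to $0$, so the ``exactly $k$ parts'' condition is preserved). It raises each part by $+1$ except that a part equal to $Mp$ is replaced by $1$, a change of $1-Mp\equiv 1\bmod p$; hence the total changes by exactly $k\bmod p$. As $1\le k\le p-1$ and $p$ is prime, $\gcd(k,p)=1$, so iterating $\phi$ realizes every shift in $\Z/p\Z$ and all $|S_{k,Mp}^{j}|$ coincide.

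The genuine difficulty is (\ref{a11}), because the value set $\{1,\dots,p-1\}$ is \emph{not} closed under any additive shift mod $p$: the part $p-1$ would be sent to $0$. This is exactly the feature that produces the exceptional behaviour at $k=1$ recorded in (\ref{a111}). To get around it I would pass to the larger value set $\{0,1,\dots,p-1\}\cong\Z/p\Z$, on which the honest additive $\Z/p\Z$-action $v\mapsto v+1\bmod p$ does live and raises the total by the number of parts. Writing $U_{m}^{i}$ for the number of size-$m$ multisets from $\Z/p\Z$ with sum $\equiv i$, this action shows $U_{m}^{i}$ is independent of $i$ whenever $1\le m\le p-1$. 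Splitting off the parts equal to $0$ gives the bijection $U_{k}^{i}=\sum_{t=0}^{k}|S_{t,p-1}^{i}|$, where $|S_{0,p-1}^{i}|$ counts the empty multiset and so equals $1$ when $i\equiv 0\bmod p$ and $0$ otherwise. Telescoping yields $|S_{k,p-1}^{i}|=U_{k}^{i}-U_{k-1}^{i}$.

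For $2\le k\le p-1$ both $U_{k}^{i}$ and $U_{k-1}^{i}$ are uniform in $i$ (the indices $k$ and $k-1$ both lie in $\{1,\dots,p-1\}$), so their difference is too, which is (\ref{a11}); and for $k=1$ one gets $|S_{1,p-1}^{i}|=U_{1}^{i}-U_{0}^{i}$, equal to $0$ for $i\equiv 0$ and $1$ otherwise since $U_{1}^{i}=1$ for every $i$, which is precisely (\ref{a111}). I expect the main obstacle to be the third identity: finding the right ambient set on which an additive $\Z/p\Z$-action genuinely exists, and then verifying that the telescoping reproduces the $k=1$ anomaly of (\ref{a111}) rather than conflicting with it. Everything else is a direct consequence of having a sum-shifting bijection with $\gcd(\text{shift},p)=1$.
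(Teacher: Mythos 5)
Your proof is correct, and it reaches all three equalities, but by a genuinely different route from the paper's. The paper proves (\ref{a11}) and (\ref{a122}) by descent on $k$: the shift $\langle\pi_1,\dots,\pi_k\rangle \mapsto \langle\pi_1+1,\dots,\pi_k+1\rangle$ is there only a \emph{partial} bijection $S_{k,p-1}^{i}\setminus V_k \to S_{k,p-1}^{i+k}\setminus W_k$, and the exceptional sets $V_k$ (partitions with $\pi_1=p-1$) and $W_k$ (partitions with $\pi_k=1$) are matched with $S_{k-1,p-1}^{1}$ and $S_{k-1,p-1}^{k-1}$ by deleting the offending part, so their equality is supplied by the inductive hypothesis; the passage to all residues then uses, as you do, that $k$ generates ${\mathbb Z}/p{\mathbb Z}$ additively. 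You eliminate the induction entirely. For (\ref{a122}) your wrap-around shift on $\{1,\dots,Mp\}$ is a \emph{total} bijection changing every sum by exactly $k \bmod p$ (strictly it generates a ${\mathbb Z}/Mp{\mathbb Z}$-action rather than a ${\mathbb Z}/p{\mathbb Z}$-action, but only the fiber bijections $S^{i}\to S^{i+k}$ are needed, so this is harmless), replacing the paper's second induction with a one-line orbit argument. For (\ref{a11}) your zero-padding device --- enlarging the value set to $\{0,\dots,p-1\}$, where the additive shift is an honest total bijection giving uniformity of $U_m^i$ for $1\le m\le p-1$, then splitting off zero parts to get $U_k^i=\sum_{t=0}^{k}|S_{t,p-1}^{i}|$ and telescoping to $|S_{k,p-1}^{i}|=U_k^i-U_{k-1}^i$ --- converts the paper's partial bijection into a total one on a larger set; the defect the paper tracks through $V_k$ and $W_k$ reappears in your account only as the non-uniformity of $U_0^i$, which is precisely why $k=1$ is anomalous, so (\ref{a111}) falls out of the same computation rather than being a separate base case. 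What each approach buys: yours is shorter, needs no induction and no bookkeeping of exceptional sets, and explains the $k=1$ exception structurally; the paper's descent yields explicit bijections between the defect sets and the level-$(k-1)$ fibers, which is the diagrammatic content of its Fig.\,2 and fits the combinatorial style of its other arguments. The only point worth stating explicitly in your write-up is the (true and easy) fact that a bijection of the value set, applied partwise to a multiset, preserves multiset size, hence the ``exactly $k$ parts'' condition --- you verify this for (\ref{a122}) and it holds identically for the shift on $\{0,\dots,p-1\}$.
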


\begin{proof}
(\ref{a111}) follows from definition.
We will use descent with respect to $k.$ 
We assume inductively  that $|S_{m,p-1}^{i}|=|S_{m,p-1}^{j}|$  for any $ 2 \leq m < k $ and $i,j \in {\mathbb Z}/p{\mathbb Z}.$ 
To start the induction we see that a map 
\begin{equation}\label{a112}
f_{2} : S^{0}_{2,p-1}-V_{2} \rightarrow S^{2}_{2,p-1}-W_{2}     \qquad f_{2} (<{\pi }_{1},{\pi}_{2}> = <{\pi}_{1}+1, {\pi}_{2}+1>
\end{equation}
where
\begin{equation}\label{ab112}
V_{2}=\{ <{\pi}_{1},{\pi}_{2} >\in S_{2,p-1}^{0}  \quad {\pi}_{1} =p-1 \}      
\end{equation}
and
\begin{equation}\label{a113}
W_{2}= \{ <{\pi}_{1},{\pi}_{2} >\in S_{2,p-1}^{2}  \quad {\pi}_{2} =1 \}
\end{equation}
is a bijection. But $|V_{2}|=|S_{1,p-1}^{1}|$ which is easily seen by means of the assignment \linebreak $<p-1, {\pi}_{2}>\,\rightarrow \, <{\pi}_{2}>.$ Similarly, the assignment 
\nolinebreak $<{\pi}_{1},1>\,\rightarrow \,m<{\pi}_{1}>$ yields $|W_{2}|=|S_{1,p-1}^{1}|.$ We can now repeat the above argument  for $S_{2,p-1}^{2}$ and $S_{2,p-1}^{4}$ instead of $S^{0}_{2,p-1}$ and  $S^{2}_{2,p-1}$ etc.
Since $2$ (in fact any $k\neq 0$ ) is a generator of the additive group ${\mathbb Z}/p{\mathbb Z}$  we see that $|S^{0}_{2,p-1}|=|S^{j}_{2,p-1}|$ for any $j\in ({\mathbb Z}/p{\mathbb Z})^{*}.$

For any $k>2,$ we  descent from  the statement  $|S_{k,p-1}^{0}|= |S_{k,p-1}^{k}|$ to $|S_{k-1,p-1}^{1}|= |S_{k-1,p-1}^{k-1}|.$ This descent is done similarly as for the case $k=2$ by assigning    to any partition
\begin{equation}\label{a114}
 <{\pi}_{1},\dots, {\pi}_{k} > \in S_{k,p-1}^{0} -V_{k}, \quad   {\text{\rm where}} \quad V_{k}= \{ <{\pi}_{1},\dots ,{\pi}_{k} >\in S_{k,p-1}^{0}  \quad {\pi}_{1} =p-1 \}   
 \end{equation}
   the partition 
   \begin{equation}\label{a115}
   <{\pi}_{1}+1,\dots, {\pi}_{k}+1 > \in S_{k,p-1}^{k}-W_{k} \quad {\text{\rm where}} \quad W_{k}=\{ <{\pi}_{1},{\pi}_{2},\dots {\pi}_{k} >\in S_{k,p-1}^{k}  \quad {\pi}_{k} =1 \}
   \end{equation}
But  (cf. Fig.2) 
\begin{equation}\label{a5}
|V_{k}|=|S_{k-1,p-1}^{1}| \quad {\text{\rm and}} \quad |W_{k}|=|S_{k-1,p-1}^{k-1}|
\end{equation}
and by inductive hypotheses we get $|S_{k,p-1}^{0}|=|S_{k,p-1}^{k}|.$

\vspace{5mm}

\begin{tikzpicture}[inner sep=0in,outer sep=0in]

\node (n) {\begin{varwidth}{4cm}{
\begin{ytableau}\
 & & &\none& *(lightgray) \bullet & & &\none &   & &\none &*(lightgray) \bullet & & & \none &*(lightgray) \bullet &  \\
\bullet& \bullet &&\none& *(lightgray) \bullet  & *(lightgray) \bullet & *(lightgray) \bullet &\none &*(lightgray) \bullet &*(lightgray) \bullet &\none &*(lightgray) \bullet & *(lightgray) \bullet& &\none &*(lightgray) \bullet &*(lightgray) \bullet \\
*(lightgray) \bullet    & *(lightgray)  \bullet &\bullet &\none & *(lightgray) \bullet   & *(lightgray) \bullet & *(lightgray) \bullet& \none&*(lightgray) \bullet &*(lightgray) \bullet &\none & *(lightgray) \bullet& *(lightgray) \bullet& &\none &*(lightgray) \bullet & *(lightgray) \bullet \\
 *(lightgray)   \bullet  & *(lightgray) \bullet  & *(lightgray) \bullet  &\none&*(lightgray)  \bullet   &*(lightgray)  \bullet  &*(lightgray)  \bullet  &\none   & *(lightgray)  \bullet & *(lightgray) \bullet  & \none & *(lightgray)  \bullet &*(lightgray)  \bullet  & *(lightgray) \bullet  &\none &*(lightgray) \bullet & *(lightgray) \bullet\\
\end{ytableau}}\end{varwidth}};

\end{tikzpicture}
\vspace{5mm}

\centerline{ Fig.2 }

Again, since $k$ is a generator of ${\mathbb Z}/p{\mathbb Z}$ we see that $|S_{k,p-1}^{0}|=|S_{k,p-1}^{j}|$ for any $j\in ({\mathbb Z}/p{\mathbb Z})^{*}.$
For proving (\ref{a122}) we use essentially the same inductive argument. The difference is that this time we have $|S_{1,Mp}^{i}|=|S_{1,Mp}^{j}| $ for any $i,j \in {\mathbb Z}/p{\mathbb Z}$ and therefore we can start induction from $m=1$. 
\end{proof}

\begin{remark}\label{rck}
In Fig.2 we depicted the descent process for $k=3$ and $p=5.$ The first diagram describes the assignment  (\ref{a115}) to any partition of the form (\ref{a114}).
The second diagram depicts an element from $V_{3}$ and the third corresponding to it an element of $S_{2,4}^{1}$. Similarly, in the fourth diagram we have an element from the set $W_{3}$ and in the fifth corresponding to it  an element from $S_{2,4}^{2}.$
\end{remark}

Now, we are ready to prove
\begin{theorem}\label{therm}
For any prime number $p,$ any natural $M$  any $1\leq N \leq p-1$ and any $1\leq j \leq p-1$we have the following equalities
\begin{equation}\label{suma}
{\sum}_{k=0}^{MN}{\frak p}(Mp , N, kp+j) = {\frac{1}{p}}\left[{{Mp+N}\choose{N}}-1\right]
\end{equation}
\begin{equation}\label{48}
 {\sum}_{k=0}^{MN}\,{\frak p}(Mp , N, kp)={\frac{1}{p}}\left[{{Mp+N}\choose{N}}-1\right]+1.
 \end{equation}
 For  any prime $p$  any $1\leq N \leq p-1$ and any $0\leq j \leq p-1$ we have the following equalities
\begin{equation}\label{suma1}
{\sum}_{k=0}^{(p-1)N}{\frak p}(p-1 , N, k\equiv j \,{\text{\rm mod}} \, p) = {\frac{1}{p}}\left[{{(p-1)+N}\choose{N}}\right]
\end{equation}

\end{theorem}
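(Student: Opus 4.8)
The plan is to reduce all three identities to the single structural Lemma \ref{thmp}, which already carries the hard enumerative content, so that what remains is bookkeeping. First I would observe that the left-hand side of each identity is the cardinality of a residue class of restricted partitions: $\sum_{k}{\frak p}(Mp,N,kp+j)$ counts the partitions inside the $Mp\times N$ rectangle whose size is $\equiv j\pmod p$, that is $|T_{Mp,N}^{j}|$, and likewise the left side of (\ref{suma1}) equals $|T_{p-1,N}^{j}|$. Conjugating Young diagrams preserves the size (hence the class modulo $p$) and exchanges the two side lengths, so $|T_{Mp,N}^{j}|=|T_{N,Mp}^{j}|$ and $|T_{p-1,N}^{j}|=|T_{N,p-1}^{j}|$. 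This is the key move: it forces the first index (the bound on the number of parts) down to $N\le p-1$, which is exactly the range $1\le k\le p-1$ in which Lemma \ref{thmp} operates.

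For (\ref{suma}) and (\ref{48}) I would expand $|T_{N,Mp}^{j}|$ along the number of parts using (\ref{ttt}) and (\ref{ttt1}): for $j\neq 0$ one has $|T_{N,Mp}^{j}|=\sum_{m=1}^{N}|S_{m,Mp}^{j}|$, while for $j=0$ an extra summand $1$ appears from the empty partition ${\bf 0}$. Since $1\le m\le N\le p-1$, equation (\ref{a122}) applies to every term and shows $|S_{m,Mp}^{j}|$ is independent of $j\in\mathbb{Z}/p\mathbb{Z}$; write $c_{m}$ for this common value. Then $|T_{N,Mp}^{j}|=\sum_{m}c_{m}$ for $j\neq 0$ and $|T_{N,Mp}^{0}|=1+\sum_{m}c_{m}$, so the $j=0$ class exceeds the others by exactly $1$. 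Summing over all $p$ classes and invoking Corollary \ref{cr1} gives $\binom{Mp+N}{N}=|T_{N,Mp}|=1+p\sum_{m}c_{m}$, whence $\sum_{m}c_{m}=\frac1p\bigl(\binom{Mp+N}{N}-1\bigr)$; substituting this back yields (\ref{suma}) and (\ref{48}).

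For (\ref{suma1}) the same expansion gives $|T_{N,p-1}^{j}|=\sum_{m=1}^{N}|S_{m,p-1}^{j}|$, plus the $+1$ from ${\bf 0}$ when $j=0$, again with $1\le m\le N\le p-1$. Now for $2\le m\le N$ equation (\ref{a11}) makes each $|S_{m,p-1}^{j}|$ independent of $j$, but the $m=1$ term is governed by the anomalous (\ref{a111}): $|S_{1,p-1}^{0}|=0$ whereas $|S_{1,p-1}^{j}|=1$ for $j\neq 0$. The point of this case is an exact cancellation: for $j\neq 0$ the $m=1$ term contributes $1$, while for $j=0$ it contributes $0$ but the empty partition restores the missing $1$. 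Hence all $p$ classes $|T_{N,p-1}^{j}|$ coincide, and dividing the total $|T_{N,p-1}|=\binom{p-1+N}{N}$ by $p$ produces (\ref{suma1}).

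The genuine mathematical difficulty is entirely inside Lemma \ref{thmp}, which I am assuming; the one point demanding care here is the role of the empty partition ${\bf 0}$. It produces the $+1$ discrepancy in (\ref{48}) but exactly compensates the $m=1$ deficit of (\ref{a111}) in (\ref{suma1}), which is why the $j=0$ class is singled out in the first case yet sits on equal footing with the others in the second. As a consistency check I would also verify that $p\mid\binom{p-1+N}{N}$ for $1\le N\le p-1$: the factor $p$ in the product $p(p+1)\cdots(p+N-1)$ survives division by $N!$ because $N<p$, so the right-hand side of (\ref{suma1}) is indeed an integer.
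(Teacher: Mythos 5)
Your proposal is correct and takes essentially the same route as the paper: identify the left-hand sides with the cardinalities $|T^{j}|$, decompose them via (\ref{ttt}) and (\ref{ttt1}) into the sets $S_{m,Mp}^{j}$ resp.\ $S_{m,p-1}^{j}$ with $1\leq m\leq N\leq p-1$, apply (\ref{a122}) resp.\ (\ref{a111})--(\ref{a11}) of Lemma \ref{thmp}, and obtain the common value by dividing the total count $\binom{Mp+N}{N}$ resp.\ $\binom{p-1+N}{N}$ from Corollary \ref{cr1} by $p$. Your explicit conjugation step $|T_{Mp,N}^{j}|=|T_{N,Mp}^{j}|$, the bookkeeping of the empty partition (the $+1$ in (\ref{48}) versus its cancellation against the $m=1$ anomaly in (\ref{suma1})), and the integrality check $p\mid\binom{p-1+N}{N}$ only make explicit what the paper's proof leaves implicit.
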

\begin{proof}
The left hand side of (\ref{suma}) is equal to $T_{Mp,N}^{j}$ whereas the left hand side of (\ref{48}) equals $T_{Mp,N}^{0}.$
Notice that for $1\leq j \leq p-1$ we have
\begin{equation}\label{roz}
T_{Mp,N}^{j}=S_{1,Mp}^{j}\sqcup S_{2,Mp}^{j} \sqcup \dots \sqcup S_{N,Mp}^{j}  
\end{equation}
and for $j=0$
\begin{equation}\label{rz}
T_{Mp,N}^{0}={\bf 0} \sqcup S_{1,Mp}^{j}\sqcup S_{2,Mp}^{0} \sqcup \dots \sqcup S_{N,Mp}^{0}  
\end{equation}
 (\ref{suma}) and (\ref{48})  follow  now from (\ref{a122}) of Lemma \ref{thmp} and Corollary \ref{cor}.
 For $1\leq j \leq p-1$ we have
 \begin{equation}\label{roz1}
T_{p-1,N}^{j}=S_{1,p-1}^{j}\sqcup S_{2,p-1}^{j} \sqcup \dots \sqcup S_{N,p-1}^{j}  
\end{equation}
 and
 \begin{equation}\label{rz1}
T_{p-1,N}^{0}={\bf 0}\sqcup S_{1,p-1}^{0}\sqcup S_{2,p-1}^{0} \sqcup \dots \sqcup S_{N,p-1}^{0} . 
\end{equation}
By (\ref{a111}) and (\ref{a11}) we see now that $|T_{p-1,N}^{j}|=|T_{p-1,N}^{0}|$ and (\ref{suma1}) follows.

\end{proof}
\begin{remark}\label{rmka}
Notice that formulas (\ref{suma} )  resp. (\ref{48}) are the sums of coefficients ${\sum}_{{n\equiv r} ({\text{mod}} p)} a_{n}$ over the nonzero  resp. zero  moduli classes modulo $p$  of the Gaussian polynomial $G(Mp, N, q)={\sum}_{n=0}^{MpN}a_{n}q^{n} .$ Similarly, (\ref{suma1}) is the sum over a moduli class mod $p$ of the coefficients of the Gaussian polynomial $G(p-1, N, q)={\sum}_{n=0}^{(p-1)N}a_{n}q^{n}. $
\end{remark}

\begin{remark}\label{rrmka}
Example \ref{ex1}  shows that one can not generalise, in a straightforward way,Theorem {\ref{therm}} to the case $k=Mp, l=Np-1.$ Take $M=N=2, p=3$
and look at the values appearing in Example \ref{ex1}.
\end{remark}
 We state the following problem
\begin{problem}\label{prob}
Find relations between sums over moduli classes modulo $d$ of coefficients of Gaussian polynomials $G(k,l,q)$ for arbitrary natural numbers $k,l$ and $d.$
\end{problem}


\address{
Department of Mathematics\\
 and Physics \\
Szczecin University \\
70-415 Szczecin \\
Poland
} 
{piotrkras26@gmail.com}

\address{
Institute of Mathematics\\
Pozna\'n University of Technology\\ 
 60-965 Pozna\'n\\
Poland 
}
{jsmilew@wp.pl}

\end{document}